\newcommand{\yes}{{\sc yes}}
\newcommand{\no}{{\sc no}}
\newcommand{\pa}{{\mathscr{P}}}
\newcommand{\cF}{{\mathcal F}}
\newcommand{\cQ}{{\mathcal Q}}
\newcommand{\cS}{{\mathcal S}}
\newcommand{\bQ}{{\bf Q }}
\newcommand{\bA}{{\bf A }}
\newcommand{\qp}{q^{p}}
\def\qed{\hfill\rule{2mm}{2mm}}
\begin{document}

\newtheorem{theorem}{Theorem}[section]
\newtheorem{lemma}{Lemma}[section]
\newtheorem{corollary}{Corollary}[section]
\newtheorem{claim}{Claim}[section]
\newtheorem{proposition}{Proposition}[section]
\newtheorem{definition}{Definition}[section]
\newtheorem{fact}{Fact}[section]
\newtheorem{example}{Example}[section]
\newcommand{\proof}{\em Proof: \em}

\title{Computing Majority with Triple Queries\\
} 
	
\author{Gianluca De Marco\footnotemark[1]
\and
Evangelos Kranakis\footnotemark[2]
\and
G\'abor Wiener\footnotemark[3]
}
\footnotetext[1]{
Dipartimento di Informatica e Applicazioni,
Universit\`a di Salerno,
84084 Fisciano (SA), Italy.
Email: \texttt{demarco@dia.unisa.it}
}
\footnotetext[2]{
School of Computer Science, Carleton University,
Ottawa, ON,  K1S 5B6, Canada. Supported in part by NSERC and MITACS grants.
Email: \texttt{kranakis@scs.carleton.ca}
}
\footnotetext[1]{
Department of Computer Science and Information Theory,
Budapest University of Technology and Economics, 
H-1521, Budapest, Hungary. Supported in part by the Hungarian
National Research Fund and by the National Office for Research and Technology (Grant Number OTKA 67651).
Email: \texttt{wiener@cs.bme.hu}
}


\maketitle

\begin{abstract}
Consider a bin containing $n$ balls colored with two colors.
In a $k$-query, $k$ balls are selected by a questioner and the
oracle's reply is related (depending on the computation model 
being considered) to the
distribution of colors of the balls in this 
$k$-tuple;
however, the oracle never
reveals the colors of the individual balls. 
Following a number of queries the questioner is said
to determine the majority color if it
can output a ball of the majority color if it exists,
and can prove that there is no majority if it does not exist.
We investigate two computation models (depending on the type of replies
being allowed).
We give algorithms to compute the minimum number of $3$-queries which are
needed so that the questioner can determine the majority color
and provide tight and almost tight upper and lower bounds 
on the number of queries needed in each case.

\vspace{0.4cm}
\noindent
{\bf Key Words and Phrases.} Search, Balls, Colors, Computation Models, Queries,
Pairing model, Y/N model.
\end{abstract}

\section{Introduction}
We are given a bin containing $n$ balls colored with
two colors, e.g., 
EK red and blue.
At any stage, we can choose any $k$ of the balls and ask the question 
``do these balls have the same color?''.
The aim is to produce a ball of the {\em majority color}
(meaning that the number of balls with that color is strictly greater than that of the
other color), or to state that there is no majority 
(i.e. there is the same number of red and blue balls)
using as few questions as possible. We are considering the worst case problem, that is our aim is to find the number of questions the fastest algorithm uses in the worst case. 

\subsection{Model of computation}

In computing the majority
there are two participants: a {\em questioner} 
(denoted by $\bQ$) and an {\em oracle} (or {\em adversary})
denoted by $\bA$
\begin{figure}[!ht]
\begin{center}
      \includegraphics[height=2.5cm]{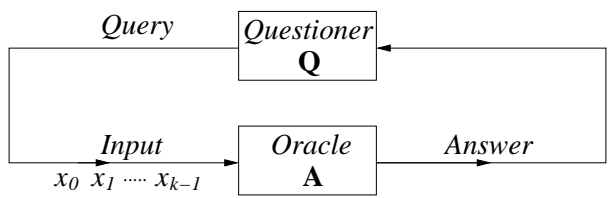}
    \caption{A questioner ($\bQ$), an oracle (or adversary) 
    ($\bA$) and a $k$-query input.}
        \label{fig:oracle}
\end{center}
  \end{figure}
(see Figure~\ref{fig:oracle}).
The questioner asks questions on the status of the color of the
balls by providing the oracle with $k$ balls 
and the oracle provides a reply which, depending on
the type of the oracle, is related to the
distribution of colors of the balls concerned.

Although the answer of the oracle depends on 
the $k$-tuple provided by the questioner, it is 
otherwise {\em permutation-independent}, i.e., the answer is
independent of the order of the balls.  Moreover, 
the
balls are endowed with distinct identities (say, integers $1,2,\ldots, n$).
Notice that in the course of querying the questioner is never allowed
to see the color of any ball but merely relies on the answer
provided by the oracle. 

\paragraph{Majority Problem.} 
Given $n$ balls of two colors, for a solution to the majority problem
we are interested to determine  
\begin{enumerate}
\item
whether or not there is a color occurring in the majority 
(i.e., more than half) of the balls, and 
\item
if indeed there is majority,
to output a ball having the majority color.
\end{enumerate}
In addition, we would like to minimize the number
of queries required. 

\paragraph{Query Models.}
Next we define two query models which will be considered in the sequel.
In each model the input to the oracle is a set 
$\{ x_0, x_1 , \ldots , x_{k-1} \}$ of $k$ balls.
\begin{enumerate}
\item
{\bf Y/N Model:}
The answer 
is either \yes~ or \no.
\yes~ means that all balls  have the same
color, \no~ means that the balls in the $k$-tuple do not
have the same color. 
\item
{\bf Pairing Model:}
The answer to a query 
is either \yes~ or \no. \yes~ means that all balls  have the same
color. \no~ means that not all the balls  have the same color 
and to show this, (any) two balls of different color are also provided.
\end{enumerate} 


\subsection{Notation}

Let $\qp_k (n), q_k (n)$ denote the minimum 
number of queries which are sufficient
to solve the majority problem for $n$ balls colored with two
colors using $k$-queries in the {\em
Y/N} and {\em Pairing} models, respectively. It is obvious that 
$\qp_k (n) \leq q_k (n)$ (assuming that these numbers exist). 
 
\subsection{Related work}

Our study 
is a natural generalization of the well-known {\em majority problem} \cite{A1} where at any stage 
{\em two} balls $a$ and $b$ are chosen and the question is ``do $a$ and $b$ have the same color?''. It is obvious that $q_2(n) = \qp_2 (n)$ and 
Saks and Werman \cite{SW} showed that $q_2(n) = n-\nu (n)$, 
where $\nu (n)$ denotes the number of 1's in the binary representation of
$n$.
 Later Alonso, Reingold, and Schott \cite{ARS2} also gave the solution for
the average case.  
Aigner \cite{A2,A3} introduced several variants and generalizations of the majority problem.
In particular, in the $(n,k)$-majority game one has to show a $k$-majority ball $z$ 
(that is, there are at least $k$ balls colored like $z$), or
declare there is no $k$-majority.

Other variations of the majority problem include the case where more than two colors are available. 
Fisher and Salzberg \cite{FS} studied the majority problem when the number of colors is any integer
up to $n$. In this case the majority color is the color such that there are at least
$n/2+1$ balls of that color. They solved the problem by showing that
$\lceil 3n/2 \rceil -2$ comparisons are necessary and sufficient.
Another natural generalization is the plurality problem where the goal is to identify a
ball of the dominant color (i.e., the one occurring more often than any other). In \cite{ADM} linear
upper and lower bounds were given for 3 colors. 
The currently best result for any number of colors 
was
given in \cite{CGMY}. The authors of \cite{CGMY} studied also the oblivious versions both for the majority
and the plurality problem. In oblivious strategies the questioner has to ask all questions in advance 
before getting any answer from the oracle. Finally, bounds for randomized algorithms can be 
found in \cite{DJKKS,KST}.

\subsection{Outline and results of the paper}


Section~\ref{existsec} 
discusses the problem of existence of a solution
to the majority problem in the
models proposed.
In Section~\ref{yn} we are considering the Y/N model for $k=3$ and give lower and upper bounds for $q_3 (n)$ whose difference is between 1 and 3, depending on the residue of $n$ modulo 4. 
Section~\ref{pairing} investigates the Pairing model. Here we give a general lower bound for $\qp_k(n)$ and compute the
precise value of $\qp_3 (n)$, namely
$$
\qp_3 (n) = \left\{
\begin{array}{ll}
n/2 +1 & \mbox{if $n$ is even, and}\\
\lfloor n/2 \rfloor & \mbox{if $n$ is odd.} 
\end{array}
\right.
$$


\section{Existence of Solutions}
\label{existsec}

Before trying to compute $\qp_k(n)$ and $q_k(n)$ we 
should discuss whether these numbers
exist at all, since it may happen that asking 
all possible queries is not enough for \bQ to solve the problem. 
It is clear that $n$ should be at least $k$ and if $n=k$, then 
the only possible query is enough in both games for $k=2$ and 
in the Y/N model for $k=3$, but not in the other cases. 
For $k\geq 3$ we prove that $q_k(n)$ and $\qp_k(n)$ exist if and only if
$n\geq 2k-2$ and $n\geq 2k-3$, respectively. 
\begin{theorem} \label{szeptetel}
Let $k\geq 3$. 
\begin{enumerate}
\item
The number $q_k(n)$ exists if and only if $n\geq 2k-2$.
\item
The number $\qp_k(n)$ exists if and only if $n\geq 2k-3$.
\end{enumerate}    
\end{theorem}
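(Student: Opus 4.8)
The plan is to treat each part through the standard dichotomy for such search problems. To prove \emph{non-existence} below a threshold we exhibit an adversary answering strategy that keeps two (or more) colorings with \emph{conflicting required outputs} consistent with every possible query; to prove \emph{existence} at and above the threshold we let \bQ\ ask (if necessary) all $k$-sets and argue that the transcript always pins down a correct output. The one object I would use throughout is the \emph{witness graph} $G$ on the $n$ balls: in the Y/N model the whole transcript is just the family of monochromatic $k$-sets, while in the Pairing model every \no-answer also contributes an edge (the revealed bichromatic pair). The recurring observation is that if \bQ\ asks all $k$-sets and each gets a \no-witness, then every $k$-set must contain a revealed edge, i.e. $\alpha(G)\le k-1$; conversely the colorings consistent with such a transcript are exactly the proper $2$-colorings of $G$ whose two classes both have fewer than $k$ balls.

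For the Y/N model ($q_k$, threshold $2k-2$): if $n\le 2k-3$ both classes can be made smaller than $k$, so \emph{all-\no} colorings exist and are mutually indistinguishable (they all give the empty monochromatic family). For even $n\le 2k-4$ this family contains a balanced (tie) coloring and an unbalanced one, and for odd $n\le 2k-3$ one checks that every ball can be placed in the minority of some all-\no\ coloring (with no ties available); either way no single output is correct, so $q_k(n)$ does not exist. For $n\ge 2k-2$ I would have \bQ\ ask every $k$-set and reconstruct the partition from the \emph{co-occurrence graph} in which two balls are joined iff they share a monochromatic $k$-set: a class of size $\ge k$ becomes a connected clique and a class of size $<k$ becomes isolated vertices, so the components and isolated vertices recover the two classes and hence the majority (declaring a tie when equal; the empty-transcript case $n=2k-2$ being forced to the balanced $(k-1,k-1)$).

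For the Pairing model ($\qp_k$, threshold $2k-3$), existence for $n\ge 2k-2$ is inherited from the Y/N case since any Y/N strategy is also a Pairing strategy ($\qp_k\le q_k$); the crux is the boundary $n=2k-3$, which is always odd. There the only all-\no\ partition has sizes $(k-1,k-2)$, so a strict majority always exists. Asking all $k$-sets, \bQ\ either gets a \yes\ (that $k$-set then lies in the unique class of size $\ge k$, a majority certificate) or only \no-witnesses, yielding a bipartite $G$ on $2k-3$ vertices with $\alpha(G)\le k-1$. The key structural lemma is that such a $G$ has \emph{exactly one} connected component of odd order (all others evenly balanced) and that the larger side $L$ of this odd component lies in the majority class of \emph{every} admissible $2$-coloring; \bQ\ computes $L$ and outputs any vertex of it. For the converse $n\le 2k-4$ I would take $G$ to be a matching of size $m$ together with $u$ isolated vertices, with $u=2$ for even $n$ and $u=3$ for odd $n$ and $m=(n-u)/2$: then $\alpha(G)=(n+u)/2\le k-1$, so every query can be answered by an already-revealed edge, while the proper $2$-colorings of $G$ realize both a tie and a strict majority (even case), or strict majorities with no common majority ball (odd case, where a single isolated vertex would always be majority, which is exactly why $u=3$, i.e. $n\le 2k-5$, is forced).

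The hard part is the existence direction for the Pairing model at $n=2k-3$, namely the structural lemma about the witness graph. The counting that forces a single odd component is short, since $\sum_i\bigl(2\alpha(G_i)-n_i\bigr)=2(k-1)-(2k-3)=1$ with every summand nonnegative; but one must then verify carefully that for a connected bipartite component the condition $\alpha(G_i)=\lceil n_i/2\rceil$ really forces its two sides to differ by at most one, and that only the odd component's orientation can tip the global majority, so that its larger side is pinned to the majority in all admissible colorings. This is also precisely the dividing line used in the lower bound, where the extra isolated vertices ($u\ge 3$) are what destroy the ``always-majority'' ball; so the two directions must be made to meet exactly at $n=2k-3$.
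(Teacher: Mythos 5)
Your proposal is correct, and its overall architecture --- adversary colorings for non-existence, exhaustive querying plus a witness-graph analysis for existence, and the matching-plus-isolated-vertices adversary graph for $n\le 2k-4$ in the Pairing model --- coincides with the paper's. The one place where you genuinely diverge is the crux case $n=2k-3$. The paper extracts the majority certificate by duality: $\alpha(G)\le k-1$ gives $\tau(G)\ge n-k+1=k-2$ via the Gallai identity, K\H onig's theorem converts this into a matching of size $(n-1)/2$, and the single uncovered vertex is pinned to the majority in every admissible coloring because each matching edge contributes one ball to each color. Your route instead runs the count $\sum_i\bigl(2\alpha(G_i)-n_i\bigr)=2(k-1)-(2k-3)=1$ over components to isolate a unique odd component whose larger side is forced into the majority class; this avoids K\H onig entirely at the cost of some component bookkeeping (note you must first observe $\alpha(G)\ge\sum_i\lceil n_i/2\rceil\ge k-1$ to turn the hypothesis into the equality $\alpha(G_i)=\lceil n_i/2\rceil$ before the summands can be read off, and that a connected bipartite component has a unique bipartition, so the ``larger side'' is well defined). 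The two certificates are essentially dual descriptions of the same structure, and either yields a valid algorithm for \bQ. Your Y/N existence argument via the co-occurrence graph is likewise a non-adaptive variant of the paper's adaptive strategy (``once a \yes\ arrives, ask all $k$-sets through a fixed $(k-1)$-subset''); it works, but is heavier than the paper's two-line pigeonhole for $n\ge 2k-1$ combined with the forced $(k-1,k-1)$ tie at $n=2k-2$.
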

The proof can be found in the Appendix.

\section{Y/N Model}
\label{yn}

It might be worth 
observing
that in the Y/N model while a \no\ answer on a pair of balls 
tells us that these balls have different
colors and consequently they can be discarded by the questioner,  
when the number of balls is greater than two then
a \no\ answer only 
tells us
that there are (at least) two balls of different color. 
Therefore, if on the one hand it seems more advantageous comparing more balls at a time, 
on the other hand it is more challenging for the questioner to exploit a less informative \no\ answer. 

First we give upper bounds and next we discuss lower bounds
on the number of queries. We conclude with some examples. 

\subsection{The upper bound}
In this section we give an algorithm for solving the majority problem. 
We start with a straightforward extension of a result of Saks and Werman.
 
\begin{lemma}
\label{oddlm}
For all $n$ odd we have $q_k (n) \leq q_k (n-1) $.
\end{lemma}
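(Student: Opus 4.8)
The plan is to reduce the odd case directly to the even case by reserving one ball as a tiebreaker; this is precisely the idea underlying the Saks--Werman result. Write $n = 2m+1$, so that $n-1 = 2m$ is even, and fix a strategy $\mathcal{A}$ that solves the $(n-1)$-ball majority problem and always halts within $q_k(n-1)$ queries. Given the $n$ balls, I would single out one arbitrary ball $x$ and run $\mathcal{A}$ on the remaining $2m$ balls, issuing no further queries. Every query made by $\mathcal{A}$ involves only those $2m$ balls, so the oracle's replies coincide with the replies it would give in the genuine $(n-1)$-ball game; hence $\mathcal{A}$ correctly settles the majority question for the $2m$ balls within $q_k(n-1)$ queries.

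The point of the reduction is that the colorings of all $n$ balls consistent with the transcript are exactly the consistent colorings of the $2m$ balls paired with an arbitrary color for $x$. Two cases arise. If $\mathcal{A}$ announces a majority color $c$ among the $2m$ balls, witnessed by some ball $b$, then $c$ occurs at least $m+1$ times among them; adjoining $x$ can only leave $c$ with at least $m+1$ occurrences out of $2m+1$, so $c$ is still the strict majority and $b$ remains a valid majority ball. If instead $\mathcal{A}$ announces that there is no majority, then every consistent coloring splits the $2m$ balls evenly as $m$--$m$, so adding $x$ makes the color of $x$ the unique majority in every consistent coloring; I then output $x$ itself. In both cases $\bQ$ produces a correct and fully justified answer for the $n$-ball instance using no more than $q_k(n-1)$ queries, which gives $q_k(n)\le q_k(n-1)$.

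The argument is insensitive to $k$ and to which of the two models is in force: since the queries touch only the $2m$ reserved balls, the model-specific content of a \no\ answer never enters. If $q_k(n-1)$ happens not to exist the claimed inequality is vacuous; otherwise the same construction shows that $q_k(n)$ exists as well. The step I would watch most carefully is the \emph{justification} obligation rather than the counting: I must argue not only that $b$ or $x$ happens to be a majority ball, but that $\bQ$ can prove this, i.e.\ that every coloring consistent with the oracle's answers agrees with the output. The product-of-consistent-colorings description is exactly what secures this, so the main care goes into stating that reduction precisely; the arithmetic comparing $m+1$ with $m$ is routine.
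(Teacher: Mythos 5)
Your proof is correct and follows essentially the same route as the paper's: reserve one ball $x$, solve the even instance on the remaining $n-1$ balls, and observe that a majority there survives the addition of $x$ while a tie makes $x$ itself the majority. Your extra care about why every consistent coloring agrees with the output is a welcome refinement, but the underlying argument is identical.
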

The proof is basically the same as the proof of Saks and Werman for $k=2$; we include it in the Appendix for completeness.

\noindent
Let us now describe our algorithm.  

\paragraph{\bf Algorithm {\sc Majority}$_3$.}
Let $m = \lfloor n/4 \rfloor$. Consider an arbitrary partition of the $n$ balls in
$m$ groups $G_1,G_2, \ldots, G_m$ of size $4$ each 
and let $R$ be the group of the $r \equiv n \mod 4$ remaining balls.
For $i = 1,2,\ldots, m$, we make all ${4 \choose 3}=4$ possible triple queries 
on the four balls of $G_i$ till we get a \yes\ answer. There are two cases:
\begin{itemize}
\item
(a) for some $j$ we get a \yes\ answer on 
at least one of the $4$ triple queries involving balls of $G_j$;
\item
(b) we get always \no\ answers on all the $4m$ comparisons.
\end{itemize}

In case (a), we can discard all balls contained in $G_1,\ldots, G_{j-1}$ as they have no effect for
determining the majority. Let $a,b$, and $c$ the three balls in $G_j$ that have the same color.
Notice that we also know whether or not the color of the fourth ball $d$ in $G_j$ 
is the same as the color of the other three balls.
The number of queries required up to this point is
$4j$. From now on, all the remaining $n - 4j$ balls are compared, one at a time, with two balls 
of identical color, e.g. $a$ and $b$. 
It is clear that this way we can count the number of balls colored like $a$ and $b$ and the number of balls
colored like a different ball $e$, if it exists. This will allow us to
solve the problem by using a total number of at most $4j + (n - 4j) = n$ queries.

In case (b), we have found that there is an equal number of red and blue balls among the
$4m$ balls in $G_1 \cup G_2 \cup \cdots \cup G_m$.
Therefore, in this case, the majority is determined by the remaining $r$ balls in set $R$. 
Hence, we have to determine, for every $r\in\{0,1,2,3\}$, the number $q_3(4m+r)$.
We  limit our analysis to $q_3(4m)$ and $q_3(4m+2)$ as, in view of Lemma~\ref{oddlm},
we have $q_3(4m+1) \leq q_3(4m)$ and $q_3(4m+3) \leq q_3(4m+2)$. 
If $r=0$, the algorithm can already state that there is no majority.
If $r=2$, the problem reduces to ascertain whether these two remaining balls have the same color 
(in which case any of them is in the majority)
or not (which implies that there is no majority). In order to do so, we
compare the 2 remaining balls (call them $x$ and $y$) with three arbitrary balls $a,b,c$
(one at a time) from $G_1$. 
Namely, we perform the following queries: $\{x,y,a\}$, $\{x,y,b\}$ and $\{x,y,c\}$. 
If $x$ and $y$ are identically colored, we obtain a \yes\ (recall that, since we are in case (b), 
there must be two different balls among $a,b$, and $c$); otherwise we always obtain \no\ answers. 
Therefore, in both cases we can solve the ptoblem using at most 3 additional queries. 

In conclusion, we have proved the following.
\begin{lemma}\label{ub}
Let $n=4m+r$ for some $r\in \{0,1,2,3\}$ and $m\geq 1$. 
 \begin{itemize}
 \item[] $q_3(4m) \leq 4m = n$;
 \item[] $q_3(4m+1) \leq 4m  = n-1$;
 \item[] $q_3(4m+2) \leq 4m+3  = n+1$;
 \item[] $q_3(4m+3) \leq 4m+3  = n$.
\end{itemize}
\qed
\end{lemma}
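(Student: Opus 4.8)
The plan is to establish the four inequalities of Lemma~\ref{ub} by directly analyzing the query count of Algorithm {\sc Majority}$_3$ in each of the two cases (a) and (b) described above, treating the four residues $r\in\{0,1,2,3\}$ separately. The key observation is that the bound depends on which case the algorithm lands in, so for each residue I would take the maximum over the two cases, since we are working in the worst case.

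\emph{First I would handle case (a)}, which is residue-independent. Here the algorithm spends $4j$ queries to reach the first group $G_j$ yielding a \yes\ answer, and then compares each of the remaining $n-4j$ balls one at a time against the known monochromatic pair $a,b$, for a total of $4j+(n-4j)=n$ queries. This gives the uniform bound $n$ for all residues from case (a), as already argued in the algorithm description. \emph{Next I would handle case (b)} residue by residue. When $r=0$ the algorithm terminates after the $4m$ queries establishing a red/blue balance among all balls, so the cost is exactly $4m=n$; combined with case (a) this yields $q_3(4m)\le n=4m$. When $r=2$ the algorithm performs $3$ additional queries $\{x,y,a\},\{x,y,b\},\{x,y,c\}$ after the initial $4m$, giving $4m+3=n+1$; since $4m+3\ge n=4m+2$, the worst case is $n+1$, yielding $q_3(4m+2)\le 4m+3$.

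\emph{Then I would invoke Lemma~\ref{oddlm}} to dispatch the two odd residues for free: since $4m+1$ is odd, $q_3(4m+1)\le q_3(4m)\le 4m=n-1$, and since $4m+3$ is odd, $q_3(4m+3)\le q_3(4m+2)\le 4m+3=n$. This is exactly the reduction the algorithm description anticipates, so no fresh argument is needed for the odd cases. Assembling the four lines then completes the lemma.

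I expect the main obstacle to be bookkeeping rather than any conceptual difficulty, since the algorithm and its analysis are already laid out above. The one point that needs genuine care is verifying that case (b) with $r=2$ actually solves the problem with those three specific queries: I must confirm that being in case (b) guarantees the balls $a,b,c$ of $G_1$ are \emph{not} all the same color, so that a \yes\ answer to any of the three queries is genuinely informative (it forces $x,y$ to agree, breaking the tie) while uniform \no\ answers correctly certify that $x\ne y$ and hence no majority exists. Checking this correctness, and confirming that no off-by-one error creeps into the counts $4j+(n-4j)$ and $4m+3$, is where I would be most careful; the inequalities themselves then follow mechanically by taking the worst of the two cases for each residue.
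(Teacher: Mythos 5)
Your proposal is correct and follows essentially the same route as the paper's own argument: case (a) is bounded by $4j+(n-4j)=n$, case (b) costs $4m$ for $r=0$ and $4m+3$ for $r=2$ (using that the \no\ answers on $G_1$ guarantee both colors appear among $a,b,c$), and the odd residues are dispatched via Lemma~\ref{oddlm}. The correctness point you flag for the $r=2$ subcase is exactly the one the paper relies on, so no further comment is needed.
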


The next lemma will be useful in the proof of the lower bound.
\begin{lemma}\label{inumber}
With at most one additional comparison, algorithm {\sc Majority}$_3$ is able to output a number $i$ such that 
there are $i$ balls of one color and $n-i$ of the other color.
\end{lemma}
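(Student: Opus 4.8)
The plan is to analyze Algorithm {\sc Majority}$_3$ and track exactly how it accumulates color-count information, showing that with one extra comparison it can always report a valid split number $i$. The key observation is that the algorithm's behavior already determines the color distribution almost completely; what Lemma~\ref{ub} leaves implicit is that the questioner internally maintains a running tally, and I would make this tally explicit.

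First I would treat case (a) of the algorithm. Here we reach a group $G_j$ on which some triple query returned \yes, giving three balls $a,b,c$ of a common color, and we also learn whether the fourth ball $d$ matches them. Crucially, the balls in $G_1,\dots,G_{j-1}$ are not simply discarded for counting purposes: each of these $j-1$ groups received four \no\ answers on all four triples, which forces each such group of four to split $2$--$2$ in color (three of one color would have produced a \yes). Thus these $4(j-1)$ balls contribute exactly $2(j-1)$ to each color count. Then every remaining ball is compared against the monochromatic pair $a,b$, so each such comparison reveals whether that ball is colored like $a,b$ or not. Summing the known contribution $2(j-1)$ from the early groups, the three balls $a,b,c$, the status of $d$, and the outcome of each subsequent one-at-a-time comparison, the questioner knows the exact number $i$ of balls of color $a$ and $n-i$ of the other color with no additional query at all.

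Next I would handle case (b). Here all $4m$ balls in $G_1\cup\dots\cup G_m$ split $2m$--$2m$, which the algorithm already uses. For $r=0$ this is the whole input, so $i=n/2$ is known outright. For $r=2$, the added queries $\{x,y,a\},\{x,y,b\},\{x,y,c\}$ tell us only whether $x$ and $y$ share a color, not which color; but one further comparison, say $\{x,a,b\}$ against the monochromatic pair $a,b$ from the balanced part, pins down the color of $x$ (and hence of $y$ if they match), yielding the exact split. The odd residues $r=1,3$ follow from Lemma~\ref{oddlm} by reduction to the even cases. This single extra comparison in the worst case is precisely the ``at most one additional comparison'' in the statement.

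The main obstacle, and the step needing the most care, is verifying that in case~(b) with $r=2$ one extra comparison genuinely suffices to fix the split rather than merely the same/different relation between $x$ and $y$. I would argue that since $a,b$ are known to be equally colored (they lie in the $2m$--$2m$ balanced part and can be chosen as a monochromatic pair, for instance from a group whose internal structure we expose, or by using two balls whose common color is already identified), the query on $\{x,a,b\}$ returns \yes\ exactly when $x$ matches that known color. Ensuring such a genuinely monochromatic, color-identified pair is available in case~(b) is the delicate point; if the balanced groups alone do not hand us an \emph{identified} color, this is exactly where the one permitted extra comparison is spent, and I would confirm that no second extra query is ever forced.
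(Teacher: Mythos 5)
There is a genuine gap, and it sits exactly where the single permitted extra comparison is actually needed. First, note what the lemma asks for: only the unordered split, a number $i$ such that $i$ balls have one color and $n-i$ the other. It does \emph{not} ask which color is which. So in case (b) with $r=2$ you do not need to ``pin down the color of $x$'': knowing whether $x$ and $y$ agree already yields $i=2m+2$ (if they agree) or $i=2m+1=n/2$ (if not), since the $4m$ grouped balls are known to split $2m$--$2m$. Your proposed fix for $r=2$ is therefore unnecessary, and it is also unsound as written: in case (b) every group answered \no\ on all four of its triples and hence splits $2$--$2$ with unknown internal structure, so no color-identified monochromatic pair $\{a,b\}$ exists anywhere. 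You flag this as ``the delicate point'' but never resolve it, and it cannot be resolved --- the two colorings obtained by swapping the color classes are indistinguishable to the questioner, which is precisely why the lemma is phrased in terms of an unordered split.

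The real content of the lemma is the case $r=3$, which you dispose of by ``reduction via Lemma~\ref{oddlm} to the even cases.'' That reduction removes a ball $z$, solves the majority problem on the remaining $4m+2$ balls, and infers the majority of the whole set --- but it never learns the color of $z$ relative to that majority, so it cannot produce the split. Concretely, if the two leftover balls $x,y$ of the reduced instance turn out to be equally colored, the $4m+2$ balls split $2m+2$ versus $2m$, and the full split is either $2m+3$ versus $2m$ or $2m+2$ versus $2m+1$ depending on $z$; these give different values of $i$. This is exactly where the paper spends its one additional comparison: after establishing that $x$ and $y$ agree, it queries $\{x,y,z\}$ to decide whether the gap is $3$ or $1$ (if $x$ and $y$ disagree they cancel and the gap is $1$ with no extra query). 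In your write-up the budgeted extra comparison is spent in the wrong place, and the one subcase that genuinely requires it is left open.
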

The proof of Lemma~\ref{inumber} can be found in the Appendix.

\subsection{The lower bound}

In the sequel, a {\em coloring} is a partition $(R,B)$
of the set of balls into two sets $R$ and $B$, where $R$ is the set of red balls and $B$ the set of blue balls. 

In this section we will give a lower bound on the number of queries 
needed to determine the majority. 
Our aim will be to construct a worst case input coloring for any unknown correct algorithm 
that solves the majority problem.
We use the usual adversary lower bound technique. 


We say that an adversary's strategy {\em admits a coloring} if such a coloring is consistent
with all the answers provided by the adversary. 
As long as the strategy devised by the adversary admits alternative possible colorings that 
are consistent with at least two different possible solutions for the majority problem, the algorithm cannot 
correctly give its output. 
The goal of the adversary is to maximize, with its strategy of answers, the number of rounds
until the algorithm can correctly give its output.

We will first consider the case of an even number $n$ of balls.
Given a sequence of queries $\cS = (Q_1, \ldots , Q_t)$ and a positive integer $i\leq t$, 
let us define the following property:

$\pa(i):$ for all $X \subseteq [n]$, with $|X| = m$, 
there exists $j \leq i$ such that one of the following conditions hold:

$(a)$ $Q_j \subseteq X$;

$(b)$ $Q_j \cap X = \emptyset$.

{\bf The adversary's strategy.} The strategy followed by the adversary is defined as follows. 
To every query $Q_j$, the adversary replies \no\, as far as $\pa(j)$ does not hold. 
Let $i$ be the first index for which $\pa(i)$ holds, if it exists.
Then there exists a set $X \subseteq [n]$, with $|X| = m$, such that 
$Q_j \not\subseteq X$ and $Q_j \not\subseteq \bar{X} = [n]-X$ for $j=1,2,\ldots, i-1$ (this is because
$i$ is the smallest index such that $\pa(i)$ holds), and
either $Q_i \subseteq X$ or $Q_i \subseteq \bar{X}$ (these are conditions $(a)$ and $(b)$ on $\pa(i)$). 
From this point forth, the adversary replies
\yes\ to $Q_i$ and all subsequent queries $\{a,b,c\}$ if all three elements $a,b,c$ belong 
to the same set $X$ or $\bar{X}$, and replies \no\ to all the other queries. 



\begin{lemma}\label{even}
Let $n=2m$ for some $m>1$. Any algorithm that solves the majority problem on  $n$ balls uses at least $n-1$ queries.
\end{lemma}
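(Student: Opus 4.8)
The plan is to pit the prescribed adversary against an arbitrary algorithm and show it cannot decide before round $n-1$. The first step is a consistency check on the strategy. Before the commit index $i$ (the first index for which $\pa(i)$ holds), $\pa$ fails, so by definition there is a balanced coloring under which every queried triple is bichromatic; this coloring is consistent with the all-$\no$ answers given so far. From round $i$ onward the committed coloring $X$ answers every query correctly by construction, and $X$ is balanced since $|X|=m$. Hence at every stage there is a balanced consistent coloring, so $\bQ$ can never safely output a ball and claim it is in the majority: the only admissible output against this adversary is ``no majority.'' It therefore suffices to prove that after any $t\le n-2$ queries the adversary's answers still admit an \emph{unbalanced} coloring, for then both outputs remain possible and $\bQ$ cannot stop.

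To produce such an unbalanced coloring I would perturb the balanced coloring $X$ guaranteed above. Call a ball $v$ \emph{pinned} if some queried triple distinguishes $X$ from the coloring obtained by recoloring only $v$, and \emph{free} otherwise. The case analysis behind the discussion of $\yes$/$\no$ answers shows that $v$ can be pinned only if it lies in a queried triple whose other two balls are monochromatic under $X$. Consequently, if even one ball is free after $t$ queries, recoloring it gives an unbalanced coloring consistent with all answers, and we are done. The whole problem thus reduces to the counting statement that $t\le n-2$ queries cannot pin all $n$ balls. When $\pa$ never holds this is immediate: every query is bichromatic under $X$ and so pins at most one ball, whence $t\le n-2$ leaves at least two free balls.

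The main obstacle is the case where $\pa$ does hold, because a $\yes$-triple is monochromatic under $X$ and pins all three of its balls at once, which a one-ball-per-query count cannot absorb. This is exactly where the role of $\pa$ must be exploited: the adversary emits $\yes$ only from round $i$ on, and $X$ is chosen precisely so that each of the $i-1$ queries preceding the commit is bichromatic under $X$ and hence pins at most one ball. Thus all multiply-pinning queries occur after $\pa$ first holds, and I would bound them by showing that driving $\pa$ to hold, i.e.\ destroying every balanced coloring that keeps all queried triples bichromatic, already costs enough rounds that the combined budget of the ``cheap'' pre-commit queries and the ``expensive'' monochromatic ones still cannot pin every ball within $n-2$ rounds. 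I expect reconciling the triple-pinning $\yes$-answers with the required one-per-round accounting to be the crux of the argument.

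Should the single free-ball perturbation fail to survive the $\yes$-rich regime, the fallback is to maintain an unbalanced consistent coloring adaptively rather than by one recoloring: track a potential equal to the number of independent colour choices the adversary still controls, and verify that each query, whether answered $\yes$ or $\no$, lowers this potential by at most one. Starting from $n$ independent choices, the adversary then retains an unbalanced option for at least $n-2$ rounds, which yields the bound. In either formulation the decisive point, and the one I would attack first, is that under this adversary no query can simultaneously settle the colours of more than a constant amount of fresh material beyond what the forced length of the pre-commit phase already pays for.
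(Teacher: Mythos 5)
Your setup is the paper's: the same adversary, and your ``pinned/free'' dichotomy is exactly the paper's conditions (1),(2),(1*),(2*) recast as a perturbation argument (flip one free ball of the balanced coloring $X$ to exhibit a consistent unbalanced coloring). Your Case where $\pa$ never holds is handled correctly and matches the paper: every queried triple is bichromatic under $X$, each such triple pins only the singleton of its $2$--$1$ split, so $t$ queries pin at most $t$ balls and the algorithm cannot stop before all $n$ are pinned.

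The genuine gap is the case where $\pa(t)$ holds, and you say so yourself: ``I would bound them by showing\dots'' and ``I expect reconciling \dots to be the crux'' are statements of intent, not arguments. A monochromatic \yes-triple pins three balls at once, so your one-ball-per-query accounting fails exactly there, and your fallback potential function asserts as an invariant (``each query lowers the potential by at most one'') precisely the claim that is false for \yes-answered queries; nothing in the proposal establishes it. The paper closes this case by a specific combinatorial argument that you would still need to supply: the pinning queries for all balls outside $Q_i$ must already occur among the pre-commit queries $Q_1,\dots,Q_{i-1}$, which are all bichromatic under $X$ and hence pin one ball each, giving $|\cF_1|+|\cF_2|\ge 2m-3$; the query $Q_i$ contributes one more; and the final query needed to reach $2m-1=n-1$ is extracted by the delicate Case (i)/(ii) analysis, whose hard subcase constructs the alternative balanced set $Y=(X\setminus\{v\})\cup\{x\}$ and derives a contradiction with the minimality of the commit index $i$. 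Note also that even if your per-query accounting were repaired, it would stall at roughly $2m-2$; the last unit, which is the only reason the bound reads $n-1$ rather than $n-2$, comes entirely from that minimality argument, which your proposal does not hint at.
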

\begin{proof}
 Assume that $Q_1,\ldots, Q_t$ is the sequence of queries that the algorithm produces before giving its output. 
We distinguish two main cases: 
either $\pa(t)$ does not hold or $\pa(t)$ holds.

{\bf Case 1:} {\em $\pa(t)$ does not hold}. 
In this case there exists $X \subseteq [n]$, with $|X| = m$, such that for all queries
$Q_1,\ldots, Q_t$ neither $(a)$, nor $(b)$ holds. According to its strategy, the adversary replies always \no\
in this case. 

If the algorithm states that there is a majority, the algorithm actually 
has to show a ball in $X$ or in $[n] \setminus X$. The adversary's strategy admits the following coloring: 
color red all the balls in $X$ and blue all the remaining balls.
Indeed, such a coloring is consistent with all the \no\ answers provided by the adversary, as
since property $\pa(t)$ does not hold, there is no query entirely contained in $X$ or in $\bar{X}$. This means that there is no majority, which contradicts the algorithm. 

Let us assume the algorithm states that there is no majority. 
Suppose that there exists a ball $x$ in $X$ such that 
$Q_i \not\subseteq \bar{X} \cup \{x\}$ for every $i=1,2,\ldots,t$.
In this case, the adversary's strategy allows us to produce the following coloring: 
color red all the balls in 
$\bar{X} \cup \{x\}$ and blue the others. Indeed, it is easy to observe that, since
$Q_i \not\subseteq \bar{X} \cup \{x\}$ for every $i=1,2,\ldots,t$, such a coloring is consistent with 
the \no\ answers provided by the adversary to each $Q_i$ for $i=1,2,\ldots,t$. This implies that there
is a majority: once again the claim of the algorithm is contradicted. The case in which there exists 
a ball $x$ in $\bar{X}$ such that $Q_i \not\subseteq X \cup \{x\}$ for every $i=1,2,\ldots,t$, is similar.
It remains to analyze the case when both the following conditions hold:

(1) for every $x \in X$ there exists a query $Q_i$ for some 
$i \in\{1,2,\ldots,t\}$, such that $Q_i \subseteq \bar{X} \cup \{x\}$;

(2) for every $x \in \bar{X}$ there exists a query $Q_i$ for some 
$i \in\{1,2,\ldots,t\}$, such that $Q_i \subseteq X \cup \{x\}$.

Condition (1) implies that the sequence of queries $Q_1, Q_2, \ldots, Q_t$ must contain a query
$\{x,x_1,x_2\}$ for every $x \in X$ and for some $x_1,x_2 \in \bar{X}$. Analogously, condition (2) implies
that there must be a query $\{x',x_1',x_2'\}$  for every $x' \in \bar{X}$ and for some $x_1',x_2' \in X$.
Since all these queries are clearly distinct, $t \geq 2m = n$.

{\bf Case 2:} {\em $\pa(t)$ holds}. In this case, the adversary replies \no\ to all the queries $Q_j$,
as far as $\pa(j)$ does not hold, and replies \yes\ to the first query $Q_i$ for which $\pa(i)$ holds. 
Since $\pa(i)$ holds, while for $j=1,2,\ldots,i-1$ $\pa(j)$ does not hold, there exists $X\subseteq [n]$, 
$|X|=m$, such
that either $Q_i\subseteq X$ or $Q_i\subseteq \bar{X}$, and 
\begin{equation}\label{X}
Q_j \not\subseteq X \mbox{ and } Q_j \not\subseteq \bar{X}\mbox{ for } j=1,2,\ldots,i-1.
\end{equation}

In the following, we assume without loss 
of generality that $Q_i\subseteq X$. 
Notice that this time the adversary's strategy admits any coloring consistent with all the \no\ answers 
given to $Q_1,Q_2, \ldots, Q_{i-1}$ and the \yes\ answer to $Q_i$.

If the algorithm concludes that there is a majority, it can be easily contradicted by observing
that the adversary's strategy admits a coloring that colors
red all the balls in $X$ and blue all the others.

Therefore, suppose the algorithm states that there is no majority. 
If there exists a ball $x \in X \setminus Q_i$ (resp. $y \in \bar{X}$) such that for every $j=1,2,\ldots,i$,
$Q_j \not\subseteq \bar{X} \cup \{x\}$ (resp. $Q_j \not\subseteq X \cup \{y\}$), 
the adversary's strategy admits the following coloring. Color 
red all the balls in $X \setminus \{x\}$ (resp. $\bar{X} \setminus \{y\}$) and blue the balls in $\bar{X} \cup \{x\}$
(resp. $X \cup \{y\}$), so to have a majority in the latter set. This contradicts the algorithm's claim. 

It remains to analyze only the case when both the following conditions hold:

(1*) for every $x \in X \setminus Q_i$ there is a query $Q_j$ for some $j\in \{1,2,\ldots,i-1\}$, such that
$Q_j \subseteq \bar{X} \cup \{x\}$;
 
(2*) for every $y \in \bar{X}$ there is a query $Q_j$ for some $j\in \{1,2,\ldots,i-1\}$, such 
that $Q_j \subseteq X \cup \{y\}$.

Let $\cF_1$ and $\cF_2$ be the set of queries necessary for (1*) and (2*) to be satisfied, respectively. 
Condition (1*) implies that the sequence of queries $Q_1, Q_2, \ldots, Q_{i-1}$ must contain at least a query 
$\{x,y,z\}$ for every $x \in X \setminus Q_i$ and for some $y,z \in \bar{X}$. Condition (2*) implies that 
there must be also at least a query $\{x',y',z'\}$  for every $x' \in \bar{X}$ and for some $y',z' \in X$.
Therefore,
\begin{equation}\label{first}
|\cF_1| + |\cF_2| \geq 2m - 3. 
\end{equation}
In order to estimate the total number of queries, we need also to consider the set $\cF_3$ of  
queries that involve balls in $Q_i= \{a,b,c\}$ and are not considered in $\cF_1$ and $\cF_2$. The query $Q_i$ clearly belongs to $\cF_3$, so  
\begin{equation}\label{second}
|\cF_3| \geq 1. 
\end{equation}
Now we have to distinguish two cases.

{\bf Case (i)}: there is no query $Q'=\{x,y,z\}$, such that $x\in \bar{X}$ and $y,z \in Q_i$.
In this case, $\cF_3$ must contain at least another query $Q$ that contains some element from $\{a,b,c\}$ and two elements from $\bar{X}$, otherwise the adversary's strategy would admit a coloring
where all the balls in $\bar{X} \cup \{a,b,c\}$ are red, which would imply the existence 
of a majority in this set. Hence, $\cF_3\geq 2$ and by (\ref{first}) the total number of queries is
$|\cF_1| + |\cF_2| + |\cF_3| \geq 2m-1$.

{\bf Case (ii)}: there exists a query $Q'=\{x,y,z\}$ such that $x\in \bar{X}$ and $y,z \in Q_i$.
Let $s$ be the number of queries involving one ball from $Q_i$ and two balls from $\bar{X}$. We may assume that
$s < 2$, otherwise $|\cF_3| \geq s + 1 \geq 2$ and the proof would be complete.
If $s < 2$, we must have that

(*) there exist two balls $a$ and $b$ in $Q_i$ such that there is no query involving 
$a$ or $b$ with two balls from $\bar{X}$.  

\noindent
Let us consider $x,y,z \in Q'$ and recall that $x\in \bar{X}$ and $y,z \in Q_i$. By (*) we have that

(**) there exists a ball 
$v\in \{y,z\}\subseteq Q_i$ such that there is no query $\{v,w_1,w_2\}$ with $w_1,w_2 \in \bar{X}$. 

\noindent
Moreover, since we are assuming that $|\cF_1| + |\cF_2| = 2m - 3$, condition (2*) implies that

(***) for every $x \in \bar{X}$ there is {\em exactly one} query $Q_x = \{x,y_x,z_x\}$ that includes 
$x$ and some $y_x,z_x \in X$.

Let $Y = X \setminus \{v\} \cup \{x\}$. We will show that there is no query among $Q_1,Q_2,\ldots, Q_i$
that is entirely contained in $Y$ or $\bar{Y}$, which contradicts the hypothesis that $\pa(t)$ holds with $i$ being the smallest index such that 
for all $X \subseteq [n]$, with $|X| = m$, 
there exists $j \leq i$ such that either $Q_j \subseteq X$ or $Q_j \subseteq \bar{X}$.

Since now $x\in Y$ and $v\in \bar{Y}$, we can easily verify that $Q_i \not\subseteq Y$ and $Q_i \not\subseteq \bar{Y}$.
It remains to show that no query among $Q_1,Q_2,\ldots, Q_{i-1}$ is entirely contained in $Y$ or $\bar{Y}$. 

As a consequence of (**), 
there is no query involving $v$ entirely contained in $\bar{Y}$. Moreover, by (***) 
$Q'$ must be the only query involving $x\in \bar{X}$ and two balls in $X$. Hence, 
there is no query involving $x$ entirely contained in $Y$. Let $\cQ$ be the set of remaining queries 
to analyze, i.e. the queries not involving $v$ nor $x$. By the definition of $Y$ and $\cQ$, it is straightforward
to observe that for all $Q\in \cQ$, $Q \subseteq Y$ (resp. $Q \subseteq \bar{Y}$) if and only if
$Q \subseteq X$ (resp. $Q \subseteq \bar{X}$). Therefore, since $i$ is the smallest index such that $\pa(i)$
holds, there is no query in $\cQ$ entirely included in $Y$ or in $\bar{Y}$. 
This is a contradiction and the proof is now complete.
\qed
\end{proof}

\begin{lemma}\label{odd}
For all $n$ odd we have $q_3 (n) \geq n-3$. 
\end{lemma}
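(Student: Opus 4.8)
The plan is to reduce the odd case to the even case I have just handled in Lemma~\ref{even}. The key idea is that if I could solve majority on $n$ balls with very few queries when $n$ is odd, then I could also solve it on $n+1$ balls (an even number) with only slightly more queries, contradicting the bound $q_3(n+1)\geq (n+1)-1 = n$ from Lemma~\ref{even}.

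First I would set $n$ odd and suppose, for contradiction, that some algorithm $\cA$ solves majority on $n$ balls using fewer than $n-3$ queries, i.e. at most $n-4$ queries in the worst case. Note that when $n$ is odd there is always a genuine majority color (the two color classes cannot have equal size), so the output is always a majority ball. I would then build an algorithm $\cA'$ for $n+1 = 2m$ balls as follows: take the extra ball, say ball $n+1$, and pair its color-determination with a run of $\cA$ on the first $n$ balls. The delicate point is that $\cA'$ must recover, from running $\cA$, not just a single majority ball but enough information to settle majority among all $n+1$ balls. This is exactly where Lemma~\ref{inumber} becomes the crucial tool: with at most one additional comparison, algorithm {\sc Majority}$_3$ (or, more to the point, the information gathered in a correct run) yields an integer $i$ such that the two color classes among the $n$ balls have sizes $i$ and $n-i$. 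Knowing this split, together with the color of ball $n+1$ relative to the known majority, determines majority among the $n+1$ balls.

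The heart of the argument is the bookkeeping: I would show that $\cA$ together with $O(1)$ extra queries gives a count of the color classes on $n$ balls, hence solves the even-case majority problem on $n+1$ balls in at most $(n-4) + c$ queries for a small explicit constant $c$. Choosing the reduction so that $c \leq 3$ forces a total of at most $n-1 = (n+1)-2$ queries on $n+1$ balls, contradicting the even lower bound $q_3(n+1)\geq (n+1)-1$. The main obstacle I expect is controlling the constant $c$ precisely and verifying that the additional comparisons against ball $n+1$ indeed let $\cA'$ distinguish the two possible majority outcomes for the augmented instance; this is where Lemma~\ref{inumber} must be invoked carefully to convert "output a majority ball" into "output the exact count $i$," since only the count (not a single witness ball) suffices to decide majority after adding one more ball. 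Once the reduction is set up cleanly, the contradiction with Lemma~\ref{even} closes the proof, giving $q_3(n)\geq n-3$ for all odd $n$.
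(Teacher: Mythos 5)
Your overall skeleton is the same as the paper's: reduce the odd case to the even lower bound of Lemma~\ref{even} by showing that a fast algorithm on an odd number of balls, augmented via Lemma~\ref{inumber} with the exact count $i$, would yield a too-fast algorithm on an even number of balls. However, your specific reduction --- adding a \emph{single} ball to pass from $n$ to $n+1$ --- has a genuine gap at the step you yourself flag as delicate. After running $\cA$ on $n$ balls you possess only a majority ball $x$ and the count $i$; you do \emph{not} possess any pair of balls known to have the same color. In the Y/N model a triple query $\{z,x,w\}$ with $w$ of unknown color is useless when answered \no\ (it does not separate ``$z$ differs from $x$'' from ``$w$ differs from $x$''), so determining the color of the new ball $z$ relative to $x$ cannot be done in the $c\le 3$ queries your budget allows: in the critical case $i=(n+1)/2$ (where the answer genuinely depends on $z$'s color, since the outcome is either majority or tie), the adversary can answer \no\ to any $\le 3$ such queries consistently with both colors of $z$ whenever the minority class has at least $3$ balls, i.e.\ for all $n\ge 7$. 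So the reduction as described does not produce a correct algorithm on $n+1$ balls within $n-1$ queries, and the contradiction never materializes.

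The paper avoids exactly this obstacle by removing \emph{three} balls $a,b,c$ rather than one: it solves majority and obtains the count on the remaining $m-3$ balls, then spends at most $4$ further queries, starting with $\{a,b,c\}$ itself. A \yes\ there manufactures the missing same-colored reference pair ($a,b$), after which $\{a,b,x\}$ settles $x$'s color relative to $a,b,c$; a \no\ there is handled by querying all of $\{a,b,x\},\{a,c,x\},\{b,c,x\}$ and using the count $i$ to resolve the all-\no\ outcome. This gives $q_3(m)\le q_3(m-3)+5$, and combining with $q_3(m)\ge m-1$ yields $q_3(n)\ge n-3$ with $n=m-3$. Your argument could be repaired by switching to this three-ball reduction (the looser additive constant $5$ is exactly why the final bound is $n-3$ rather than $n-1$). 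A smaller shared caveat: Lemma~\ref{inumber} is stated only for the specific algorithm {\sc Majority}$_3$, so invoking it for an arbitrary optimal algorithm requires the same leap the paper itself takes; I do not count that against you, but it is worth being aware that the count-revealing property is not automatic for every majority algorithm.
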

The proof is based on Lemmas \ref{inumber} and \ref{even} and can be found in the Appendix.
 
Now we can state the main result in the following theorem.
\begin{theorem}
\label{mainthm}
Let $n=4m+r$ for some $r\in \{0,1,2,3\}$ and $m\geq 1$. We have:
\begin{itemize}
 \item[] $n-1 \leq q_3(4m)   \leq n$;
 \item[] $n-3 \leq q_3(4m+1) \leq n-1$;
 \item[] $n-1 \leq q_3(4m+2) \leq n+1$;
 \item[] $n-3 \leq q_3(4m+3) \leq n$.
\end{itemize}
\end{theorem}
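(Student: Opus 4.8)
The plan is to obtain the theorem as a direct assembly of the upper bound already established in Lemma~\ref{ub} together with the two lower bounds of Lemmas~\ref{even} and~\ref{odd}, handling each residue class $r\in\{0,1,2,3\}$ separately. The four upper bounds $q_3(4m)\le n$, $q_3(4m+1)\le n-1$, $q_3(4m+2)\le n+1$, and $q_3(4m+3)\le n$ are precisely the content of Lemma~\ref{ub}, so the upper side of every inequality in the theorem requires no further work. What remains is to supply the matching lower bounds, and here the natural split is by the parity of $n$ rather than by the residue modulo $4$.

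For the even residues $r\in\{0,2\}$ I would invoke Lemma~\ref{even}. Writing $n=2m'$, we have $m'=2m$ when $r=0$ and $m'=2m+1$ when $r=2$; since the theorem assumes $m\ge 1$, in both cases $m'\ge 2>1$, so the hypothesis of Lemma~\ref{even} is met and it yields $q_3(n)\ge n-1$. Combining with the upper bounds gives the first line $n-1\le q_3(4m)\le n$ and the third line $n-1\le q_3(4m+2)\le n+1$. For the odd residues $r\in\{1,3\}$ the number $n$ is odd, and Lemma~\ref{odd} directly gives $q_3(n)\ge n-3$; together with the upper bounds this produces the second line $n-3\le q_3(4m+1)\le n-1$ and the fourth line $n-3\le q_3(4m+3)\le n$. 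This covers all four residues and finishes the argument.

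Since every quantitative ingredient has already been proved, there is no genuine obstacle to overcome; the proof is essentially a bookkeeping step. The only point that warrants a moment's care is the matching of indices: the variable $m$ in the theorem (where $n=4m+r$) differs from the variable in Lemma~\ref{even} (where $n=2m$), so one must check that the even cases $n=4m$ and $n=4m+2$ really do satisfy the $m'>1$ hypothesis of Lemma~\ref{even}, which they do because $m\ge 1$ forces $m'\ge 2$. Once this is verified, the four displayed inequalities follow immediately.
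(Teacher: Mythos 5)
Your proposal is correct and matches the paper's own proof, which likewise assembles the theorem by combining the upper bounds of Lemma~\ref{ub} with the lower bounds of Lemma~\ref{even} (even $n$) and Lemma~\ref{odd} (odd $n$). The extra care you take over the index translation $n=2m'$ versus $n=4m+r$ is sound and harmless.
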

\begin{proof}
The proof follows immediately by
combining Lemma~\ref{ub} with Lemmas~\ref{even} and \ref{odd}.
\qed
\end{proof}
Tightening the bounds of Theorem~\ref{mainthm} so as to compute
the exact value of $q_3 (n)$ does not seem to be
easy, but we provide the exact values of $q_3(n)$ for $n \leq 6$ in the appendix.

\section{Pairing Model}
\label{pairing}

In this section we focus on the Pairing model. 
First we prove a lower bound on $k$-tuples 
that does not depend on $k$ and then show that the bound is 
tight for $k = 3$. 


\subsection{A general lower bound}

First we prove a lemma.

\begin{lemma}
\label{lemma1}
Let $n\geq 2k-3, \;  k\geq 3$. Then 
$$
\qp_k (n) \geq \left\{
\begin{array}{ll}
n/2 +1 & \mbox{if $n$ is even, and}\\
\lfloor n/2 \rfloor & \mbox{if $n$ is odd.} 
\end{array}
\right.
$$
\end{lemma}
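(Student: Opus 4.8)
The plan is to run an adversary argument organized around a ``difference graph'' that records the information the questioner has extracted. First I would observe that in the Pairing model the answers convey only two kinds of information: a \no\ answer on a $k$-set reveals one concrete bichromatic pair $\{u,v\}$, while a \yes\ answer certifies that a $k$-set is monochromatic. Accordingly I associate to any sequence of answers the graph $G$ on $[n]$ whose edges are the revealed bichromatic pairs (``difference edges''), together with the monochromatic cliques forced by \yes\ answers. The consistent colorings are exactly the $2$-colorings of each connected component of $G$ making difference edges bichromatic and \yes-cliques monochromatic; each component $C_i$ admits two such colorings (swap the colors), so it contributes $\pm\delta_i$ to $\#red-\#blue$, where $\delta_i=|r_i-b_i|$ is its imbalance. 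Writing $\Phi=\sum_i\delta_i$ and ordering $\delta_1\ge\delta_2\ge\cdots$, I would show that the questioner may legitimately stop precisely when either all $\delta_i=0$ (then $\#red=\#blue$ in every consistent coloring, i.e.\ no majority) or $\delta_1>\sum_{i\ge 2}\delta_i$ (then one component dominates the sign of $\#red-\#blue$, so a majority exists and the larger color class of $C_1$ is a valid witness ball). Since $\delta_i\equiv|C_i|\pmod 2$, we get $\Phi\equiv n\pmod 2$; in particular for odd $n$ the first stopping condition is impossible.

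Next I would reduce the count to counting component merges. The adversary answers \no\ to every query, revealing a bichromatic pair; it is forced to answer \yes\ only on a set already known to be monochromatic, and such a query conveys no new information and merely increases the total, so it is harmless for a lower bound. Hence every informative query adds exactly one difference edge and reduces the number of connected components of $G$ by at most one. Starting from $n$ isolated balls, reaching a configuration with $c_f$ components therefore costs at least $n-c_f$ queries, so the whole bound comes down to showing that the adversary can answer so that the configuration stays non-terminal until enough merges have occurred.

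The heart of the proof (even $n=2m$) is an adversary invariant that I would maintain for as long as at most $m$ merges have been performed: every component has imbalance in $\{0,1\}$, and there are at least two components of imbalance $1$ that are \emph{singletons}. Such a configuration has $\delta_1=1\le\Psi:=\sum_{i\ge 2}\delta_i$ and $\Phi\ge 2>0$, hence is non-terminal by the criterion above. To preserve it the adversary routes every merge \emph{away} from the two reserved singletons and keeps the remaining material balanced: on a query whose other balls admit a bichromatic pair avoiding the reserved singletons it reveals such a pair, and it aligns the revealed pair so that the new imbalance is $0$ or $1$, never $2$. The key point is that the two reserved balls can never be \emph{forced} to merge, because forcing the merge of two components requires querying two equally-colored balls of one of them together with a ball of the other, and a singleton contains no equally-colored pair. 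Consequently, after $m$ merges the configuration is still non-terminal, so the algorithm needs at least $m+1$ queries; equivalently the adversary guarantees a terminal graph with at most $m-1$ components (one oversized size-$\ge 4$ balanced component in the no-majority case, or a component of imbalance $\ge 2$ in the majority case), whence $n-c_f\ge 2m-(m-1)=m+1$. For odd $n=2m+1$ the no-majority option is excluded, the cheapest certificate is a near-perfect matching leaving one isolated witness singleton (imbalance $1$, dominant), and the same routing argument — keeping at least two free singletons until $m-1$ merges are done — yields the clean value $\lfloor n/2\rfloor=m$ with no extra query.

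The step I expect to be the genuine obstacle is the robustness of this invariant against an \emph{arbitrary adaptive} questioner, and in particular proving that a suitable redirecting pair always exists inside every query while fewer than $m$ merges have been made. This is where the interactive nature is essential: the static minimum over terminal graphs is only $m$, since a perfect matching already certifies no majority with $m$ edges, so the extra $+1$ for even $n$ does not come from edge counting but solely from the adversary's ability to dodge completion of that matching. I would therefore concentrate the effort on pinning down exactly when the questioner can strip a reserved singleton of its ``unforceable'' status (namely when the only bichromatic pairs available in a query involve it), showing that whenever this happens the adversary can transfer the reserved role to another still-isolated ball without ever dropping below two such balls during the first $m$ merges, and on checking that the whole argument is uniform in $k$ — larger $k$ only offers the adversary more pairs to reveal, so it never hurts.
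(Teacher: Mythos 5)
Your overall architecture is the same as the paper's: the adversary always answers \no, the revealed bichromatic pairs form a bipartite ``difference graph'' $G$ whose components can each be $2$-colored in two ways, the questioner can stop only when one component's imbalance dominates (majority) or all imbalances vanish (no majority), each query contributes one edge, and for even $n$ the extra $+1$ comes solely from the adversary dodging the completion of a perfect matching. However, as you yourself flag, the proposal stops short of proving the one step that carries the argument, and the invariant you propose to maintain is in fact not maintainable as stated. Concretely, ``every component has imbalance in $\{0,1\}$'' can be destroyed by force: let $C_1$ be an imbalance-$1$ component with $u,w$ two balls on its majority side and let $v$ be the unique minority-side ball of another imbalance-$1$ component $C_2$; on the query $\{u,v,w\}$ the pair $\{u,w\}$ is forced monochromatic and hence not revealable, so the adversary must reveal $\{u,v\}$ or $\{w,v\}$, and either choice glues majority side to minority side, producing a component of imbalance $2$. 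Your ``reserved singleton'' bookkeeping has a similar soft spot for $k=3$: a query $\{p,q,w\}$ forces the adversary to touch $p$ or $q$, so the reservation must be re-proved rather than asserted. (Both defects are repairable --- an imbalance-$\ge 2$ component turns out to cost enough edges on its own, and the existence of two singletons follows from counting as long as at most $m-1$ merges have occurred --- but the proposal does not contain these repairs.)

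The paper's proof shows that no step-by-step invariant is needed at all, which is why its argument is both shorter and airtight. It uses a single \emph{static} lemma (Lemma~\ref{belsolemma1}): if some vertex always lies in the greater part of every bipartition of $G$, then $G$ has at least $\lfloor n/2\rfloor$ edges (move the isolated vertices of the witness's part to the other side). This immediately gives the odd case with the plain ``always answer \no'' adversary and no routing whatsoever. For even $n$ it adds only two observations: declaring ``no majority'' requires $G$ to have no isolated vertex, hence at least $n/2$ edges; and the unique configuration reachable with $n/2$ edges that would be terminal is a perfect matching, so the adversary plays obliviously for the first $n/2-1$ queries and dodges only once, at query $n/2$, when the existing edges happen to be independent --- which is possible precisely because $k\ge 3$ supplies a third ball to pair with one of the two remaining isolated vertices. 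I would encourage you to replace your per-step invariant with this static counting lemma plus the single-step dodge; that is exactly the content you identified as ``the genuine obstacle,'' and it is easier than you anticipate.
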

\begin{proof}
First we describe a strategy for the Adversary which guarantees 
that the number of queries the questioner should use is at least  
$\lfloor n/2 \rfloor $ if $n$ is odd. The strategy is quite simple, 
\bA always answers \no~ whenever it is possible, that is, if it is not 
known before the query that all $k$ balls asked have the same color 
(in which case the answer is \yes, of course, but such a question 
will never be asked). \bA also has to show two balls of different 
color; these can be any pair of balls that may have different color before the query. 

To see that \bQ has to use at least  $\lfloor n/2 \rfloor $ queries even 
against this simple strategy, we use graphs to 
describe the knowledge of \bQ (this is the same graph we use in the proof of Theorem \ref{szeptetel}). In ths Pairing model to every negative answer 
of \bA there corresponds a pair of balls having different colors. Let these 
pairs be the edges of a graph $G$, whose vertices are the balls. 

By the strategy of \bA there exists a coloring of the balls (with 
colors blue and red) such that there is no edge between two balls of 
the same color; therefore the graph $G$ is bipartite. Suppose now that 
\bQ can show a ball of majority color. This is possible 
if and only if there is a vertex $x$ of $G$ that always appears in the 
greater part in every bipartition of $G$. (Notice that now there is a 
majority color, since $n$ is odd.) Now it is easy to see that
$G$ has at least $\lfloor n/2 \rfloor$ edges (for completeness this appears in the Appendix as Lemma~\ref{belsolemma1}), thus the number of 
queries is also at least $\lfloor n/2 \rfloor$.

Now we slightly modify the above strategy to obtain the lower bound for $n$ even. \bQ can declare that there is no majority if and only if in every bipartition of $G$ the two parts have the same size, which is impossible if there is an isolated vertex in $G$. 

Suppose that \bA  answers the first $n/2 -1$ questions the same way as above. 
We consider two cases.

\paragraph{\bf Case 1.} The edges of $G$ after the $n/2 -1$ queries are not independent. Now the answer to the next query is also answered the same way as above. Since there is an isolated vertex in $G$ (we have $n/2$ edges that are not independent), 
\bQ cannot declare that there is no majority. Moreover, no vertex $x$ can always appear in the greater part in every bipartition of $G$, because there are only $n/2$ edges in $G$ (in the part where such an $x$ would appear, there would be at least $n/2+1$ non-isolated vertices).
This completes the proof in Case 1.

\paragraph{\bf Case 2.} The edges of $G$ after the $n/2 -1$ queries are independent. Now \bA has to be careful, since an edge between the two remaining isolated vertices would guarantee that there is no majority. So the edge is drawn somewhere else, which is possible, because $k\geq 3$ and there is no cycle in $G$ (that is, \bA may draw the edge between any two of the $k$ vertices of the query). Now the situation is the same as in Case 1. 
This completes the proof. \qed
\end{proof}

\subsection{Determining $\qp_3 (n)$}

Now we prove that the bound we have just proved in Lemma~\ref{lemma1} is tight for $k=3$. 
This is interesting because of several reasons. The strategy of \bA 
that gave the lower bound is quite simple and the lower bound is much 
smaller than the value of $q_2(n)$ (as well as $q_3 (n)$). However, 
the most surprising is 
that the function $\qp_3$ is not increasing, since $\qp_3(2n) = n+1$ 
and $\qp_3(2n+1)= n$. 
Actually, we have seen that  
$q_k$ is not strictly increasing and the same proof shows that the same is true for $\qp_k$:  $\qp_k(2n+1) \leq \qp_k(2n)$ and $q_k(2n+1) \leq q_k(2n)$.
Nevertheless, a complexity function that is not even increasing is 
quite unusual in search theory. 

The main theorem of this section is the following.
\begin{theorem}
\label{pairthm}
Let $n\geq 3$. Then 
$$
\qp_3 (n) = \left\{
\begin{array}{ll}
n/2 +1 & \mbox{if $n$ is even, and}\\
\lfloor n/2 \rfloor & \mbox{if $n$ is odd.} 
\end{array}
\right.
$$
\end{theorem}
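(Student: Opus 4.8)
The lower bound is already in hand from Lemma~\ref{lemma1}, so the plan is to prove the matching \emph{upper} bounds: an explicit strategy for \bQ\ using at most $n/2+1$ queries when $n$ is even and at most $\lfloor n/2\rfloor$ queries when $n$ is odd. I would organize the whole argument around one reduction move and an induction on $n$ in steps of two. The core move is to pick three so-far-untouched balls $\{x,y,z\}$ and query them. If the reply is \no, the oracle also exhibits two balls of different colour; these two cancel (one of each colour), so the majority among the remaining $n-2$ balls coincides with the majority among all $n$, while the third queried ball carries no information and is returned to the pool. Hence a single query turns an instance on $n$ balls into a fresh instance on $n-2$ balls, giving $\qp_3(n)\le 1+\qp_3(n-2)$ along every \no-branch. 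Feeding this recursion the two base values $\qp_3(3)=1$ and $\qp_3(4)=3$ — which I would check directly by the same kind of small case analysis used for Algorithm~\textsc{Majority}$_3$ (with three balls a single query settles everything, and with four balls three queries suffice) — propagates in parity-preserving steps to $\lfloor n/2\rfloor$ for odd $n$ and $n/2+1$ for even $n$, exactly matching Lemma~\ref{lemma1}.

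The recursion is immediate along \no-branches; the real content is to show that the \yes-branches stay within the same budget. On a \yes\ the three balls form a monochromatic triple of \emph{unknown} colour, so I would maintain a distinguished monochromatic pile $A$ and relate each new triple to it with one auxiliary query $\{a,a',c\}$, where $a,a'\in A$ and $c$ lies in the new triple: since $a,a'$ agree, a \yes\ forces $c$ (hence its whole triple) to share $A$'s colour and the triple merges into $A$, while a \no\ certifies the opposite colour and lets us cancel three balls of $A$ against the triple, removing six balls at once. The questioner terminates as soon as $|A|$ exceeds the number of balls not yet known to share $A$'s colour, for then $A$'s colour is a strict majority regardless of the rest; in the even case the complementary rule certifies \emph{no} majority once every ball has been placed in a cancelled pair.

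The hard part will be the amortized accounting in the \yes-heavy regime. A merge grows the pile without cancelling anything, so it does not reduce the number of live balls and a naive per-query bound fails there; its payoff is realized only through the early-termination rule. I would therefore track a potential that combines the number of untouched balls with the distance of $|A|$ from the stopping threshold, and argue that each query — or each \yes\ paired with its following relate-query — drives this potential down by the correct amount separately in the ``merge'' and the ``cancel'' regime, no matter which differently-coloured pair the adversary chooses to reveal. Pinning down this balance so that the total never exceeds $\lfloor n/2\rfloor$ (odd) or $n/2+1$ (even), together with the parity bookkeeping that for odd $n$ keeps the live-ball count odd (so $|A|$ can never equal the threshold) and the explicit verification of the base cases, is where essentially all of the difficulty lies; once it is done, combining with Lemma~\ref{lemma1} gives the claimed equality.
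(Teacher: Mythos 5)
Your use of Lemma~\ref{lemma1} for the lower bound, your \no-branch reduction (one query turns an $n$-instance into an $(n-2)$-instance), and your base cases $\qp_3(3)=1$, $\qp_3(4)=3$ are all correct. The gap is in the \yes-branch, and it is not merely that the amortized accounting is left open: the algorithm you describe provably exceeds the claimed bound. Consider an adversary that maintains a fixed bipartition and, against your strategy on $n=18$ balls, answers \yes\ to $\{1,2,3\}$ (creating the pile $A$), answers \yes\ to every subsequent fresh triple, and alternates its answers to your relate-queries $\{a,a',c\}$ between \yes\ (merge, so $|A|$ goes from $3$ to $6$) and \no\ with exhibited pair $\{a,c\}$ (so you cancel the new triple against three balls of $A$ and $|A|$ drops back to $3$). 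Each merge-then-cancel cycle costs $4$ queries while permanently resolving only $6$ balls and returning $|A|$ to $3$, so your early-termination rule never fires early; tracing this out gives $11$ queries against the bound $18/2+1=10$, and asymptotically about $2n/3$ queries. No potential function can repair this particular algorithm, because along this line of play every natural measure of progress (balls cancelled, balls removed from the untouched pool, the gap between $u$ and $|A|$) decreases by only $6$ per $4$ queries, i.e.\ at rate $1.5$ per query, whereas finishing within $n/2+1$ queries requires a sustained rate of $2$.

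The structural flaw is that you pay one extra relate-query for \emph{every} new monochromatic triple, so a monochromatic group of $3j+3$ balls costs you $2j+1$ queries (rate about $1.5$), and that investment is lost whenever part of the group is later cancelled three-against-three. The paper's proof avoids this with the Saks--Werman ternary bin structure: keep monochromatic bins whose sizes are powers of $3$, and whenever three bins of \emph{equal} size exist, query one representative from each. A \yes\ merges all three bins at the price of a single query; a \no\ lets \bA\ exhibit two differently coloured representatives, whereupon both of their bins are discarded wholesale. Under this rule a bin of size $3^c$ costs only $(3^c-1)/2$ queries to build and removing two such bins costs exactly $3^c$ queries in total for $2\cdot 3^c$ balls, so the rate never falls below $2$ balls per query. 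A second phase then eliminates the at most two surviving bins of each size from the top down, and the $+1$ for even $n$ arises from the same two-query endgame on the last two singletons (using a known differently coloured pair $c,c'$) that you correctly identified. Replacing your ``relate each new triple to $A$'' step by this three-equal-bins rule turns your outline into the paper's argument.
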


\begin{proof}
We just have to show that the bounds in Lemma \ref{lemma1} are tight, 
that is, give an algorithm for the questioner \bQ that uses $n/2 +1$ 
queries if $n$ is even and $\lfloor n/2 \rfloor $ queries if $n$ is odd. 

The algorithm consists of two main parts of which the first is the 
formation of bins with balls and is similar to the one in \cite{SW}. 
Initially place all the $n$ balls in $n$ different bins and at each step 
do the following.
\begin{enumerate}
\item
Select any $3$ equal size bins and pick one representative
from each of them and query these $3$  balls.
\item
If the answer is \yes, then merge the bins into one new bin.
\item
If the answer is \no, then \bA 
specifies two balls of different colors. Remove the two corresponding bins.
\item
Iterate the process until there are no $3$ bins of
equal size.
\end{enumerate}

It is obvious that all remaining bins have size
a power of $3$. The process stops because we can no longer find
$3$ bins of equal size, that is, each bin size occurs at most twice.  
It is easy to check that to build a bin of size $3^c$ the
questioner \bQ needs $(3^{c}-1)/2$ queries, while to remove two bins of 
size $3^c$ each, the questioner \bQ needs exactly $3^c$ queries.  
Thus if the total number of removed balls is $2t$, then \bQ used $t$ 
queries  in order to get rid of them. 

This means that after the first main step of the algorithm, 
\bQ has $n-2t$ balls in bins whose sizes are powers of $3$, 
bin sizes occur at most twice and \bQ used $t$ queries to 
remove the $2t$ balls and $(3^{c}-1)/2$ queries to build 
a bin of size $3^c$. In other words, if the bin size $3^i$ 
occurs $a_i$ times ($i=0,1,2$) then the sequence 
$a_s,a_{s-1}, \ldots, a_0$ is the ternary expansion of 
$n-2t$ (where $s= \lceil \log_3 (n-2t) \rceil $) and \bQ 
used $t+\sum_{i=0}^s a_i (3^{i}-1)/2)$ queries altogether. 

It is obvious that if $a_s=0$ (which occurs iff $n-2t=0$), 
then there is no majority, if $a_s=1$, then the corresponding 
bin contains majority type balls, while if $a_s=2$, then \bQ 
cannot show a majority type ball, neither can prove that there 
is no majority at the moment, so we may suppose that $a_s =2$ 
(otherwise we are done, since $2t+ \sum_{i=0}^s a_i (3^{i}-1) \leq n$, 
so for the number of queries we have that
$t+ \sum_{i=0}^s a_i (3^{i}-1)/2) \leq n/2$). 

Now starts the second part of the algorithm, where \bQ tries to 
eliminate the remaining bins. If $s>0$, then \bQ chooses one ball 
from both bins of size $3^s$ and a third ball from one of these bins 
and queries this triple. Given the answer, \bQ learns if the two 
biggest bins have balls of the same color or not. If the answer 
is \yes, then all balls in these two bins are clearly in majority, 
if the answer is \no, then \bQ removes the two bins. Iterate this 
process if necessary (if the greatest bin size occurs twice), 
until it is possible, that is, while $s>0$. To remove two bins of 
size $3^c$ each, \bQ still needs exactly $3^c$ queries, therefore 
it is clear that at the end of the second part \bQ can either 
solve the problem (that is, find a majority ball or show that there 
is no majority) using at most $n/2$ queries or \bQ has just two bins 
of size $1$ and used exactly $(n-2)/2$ queries to reach this position 
(which is possible only if $n$ is even). Now to finish the algorithm 
and the whole proof let the two remaining balls be $a$ and $b$ and 
let $c$ and $c'$ be two balls of different color (such balls exist 
and are known to \bQ, since $n\geq 3$ and $n$ is even). Now \bQ 
queries the balls $a,b,c$. If the answer is \yes or the answer is 
\no and the two differently colored balls shown by \bA is $a$ and $b$, 
then \bQ can solve the problem using $n/2$ queries altogether. 
If the answer is \no and the two different balls shown by \bA 
are not $a$ and $b$, then \bQ queries the balls $a,b,c'$, thus 
solving the problem using $n/2 +1$ queries altogether.
This completes the proof of Theorem~\ref{pairthm}.
\qed
\end{proof}


\section{Conclusion}

In this paper we studied
the minimum number of triple queries needed to determine the majority color in a set of 
$n$ colored balls (colored with two colors) under 
two models a) Y/N, and b) Pairing, which depend on the type of queries
being allowed. In addition to 
tightening the bounds for the majority problem in the
Y/N model, several interesting questions remain open
for further investigation, including computing majority on 
a) $k$-tuple queries, for
some fixed $k$, b) bins with balls colored with more than two colors,
as well as c) other natural computation models. 


\newpage
\section*{Appendix}

\subsection*{Proof of Theorem~\ref{szeptetel}}

\begin{proof}
Consider $k$-tuple inputs.
We have to show that \bQ can solve the 
majority problem  by asking all possible queries 
if and only if $n\geq 2k-2$ in the Y/N model and if and only if $n\geq 2k-3$ in the Pairing model.
It is easy to see for both problems 
that if \bA gives a positive answer (i.e., declares that balls in a 
set $S$ of size $k$ have the same color), then \bQ can learn 
which balls have the same color as the balls in $S$, thus 
solving the problem. The questioner \bQ just has to ask every query
containing the balls of $S'$, where $S'$ is an arbitrary 
subset of $S$ having size $k-1$. 

Thus we may assume that \bA never gives a positive answer. 
Now the first part of the theorem is easy to prove. If $n \geq 2k-1$, then \bA must give a positive answer some time, since there are $k$ balls having the same color, while if $n = 2k-2$ and there are no $k$ balls of the same color, then there cannot be majority (both colors appear $k-1$ times). 
On the other hand, if $n \leq 2k-3$ then it is possible that there are exactly $k-1$ red balls and $n-k+1 \leq k-2$ blue balls, in which case the answer is always negative and the problem cannot be solved, since there is a majority but no ball can be named as one in majority. 


Since in the Pairing model 
\bQ is given more information, it is obvious that it can be solved if 
$n \geq 2k-2$. To show that in the case $n = 2k-3$, \bQ can also solve 
the problem, but in the cases $n \leq 2k-4$ cannot, we use graphs to 
describe the knowledge of \bQ. In this approach to every negative answer 
of \bA there corresponds a pair of balls having different colors. Let these 
pairs be the edges of a graph $G$, whose vertices are the balls. 

It is straightforward that there always exists a coloring of the balls 
(with colors blue and red) such that there is no edge between two balls 
of the same color; therefore the previously defined graph $G$ is bipartite. 
It is also obvious that \bQ can show a ball of majority color if and only 
if there is a vertex $x$ of $G$ that always appears in the greater part 
in every bipartition of $G$ and \bQ can declare that there is no majority 
if and only if in every bipartition of $G$ the two parts have the same size. 


Let us consider now the case $n \leq 2k-4$. Let $G$ be a graph on 
$n\leq 2k-4$ vertices with an edge set consisting of $\lfloor n/2 \rfloor- 
1$ independent edges. It is easy to see that any set of $k$ vertices 
spans at least one edge, so the answers of \bA are all negative and to 
every answer \bA can show a pair of balls that are neighbours in $G$. 

Assume now to the contrary that \bQ can solve the problem, that is in 
every bipartition of $G$ the two parts have the same size or there 
exists a vertex $x$ of $G$ that always appears in the greater part 
in every bipartition of $G$.

The first case is clearly impossible since there exists an isolated vertex in $G$. For the second case we prove a useful lemma.

\begin{lemma} \label{belsolemma1}
Let $G$ be a bipartite graph on $n$ vertices. If there exists a vertex $x$ of $G$ that always appears in the greater part in every bipartition of $G$, then $G$ has at least $\lfloor n/2 \rfloor$ edges.
\end{lemma}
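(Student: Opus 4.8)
\section*{Proof plan for Lemma~\ref{belsolemma1}}

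The plan is to analyze $G$ through its connected components and to translate the hypothesis into a single numerical inequality about their sizes. Write $C_1,C_2,\ldots,C_t$ for the connected components of $G$, with $x\in C_1$. Since $G$ is bipartite, each $C_i$ is bipartite and, being connected, admits a unique proper $2$-colouring up to swapping its two colour classes; denote the sizes of these classes by $p_i$ and $q_i$ and call $m_i=|p_i-q_i|$ the \emph{imbalance} of $C_i$. A bipartition of $G$ is obtained precisely by choosing, independently for each component, which of its two classes is placed on which side. Fixing the side $S$ to be the one containing $x$, the component $C_1$ contributes its $x$-class (say of size $c_1$) to $S$ and its other class (size $c_1'$) to $\bar S$, while each $C_i$ with $i\geq 2$ contributes $\pm m_i$ to $|S|-|\bar S|$ with a freely chosen sign.

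The first real step is to reformulate the hypothesis. Because the signs for $i\geq 2$ are arbitrary, the least possible value of $|S|-|\bar S|$ equals $(c_1-c_1')-\sum_{i\geq 2}m_i$. The requirement that $x$ lie in the strictly larger part for \emph{every} bipartition is therefore equivalent to $(c_1-c_1')-\sum_{i\geq 2}m_i>0$. In particular $c_1>c_1'$, so $x$ sits in the larger class of its own component and $c_1-c_1'=m_1$; thus the hypothesis is exactly
$$ m_1 > \sum_{i\geq 2} m_i, \qquad \text{equivalently} \qquad m_1 \geq 1 + \sum_{i\geq 2} m_i. $$

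Next I would bound the edges from below. A connected graph on $v$ vertices has at least $v-1$ edges, so summing over components gives $|E(G)|\geq n-t$. It therefore suffices to prove $t\leq\lceil n/2\rceil$, since then $|E(G)|\geq n-\lceil n/2\rceil=\lfloor n/2\rfloor$. To bound $t$ I would use the elementary size estimates $|C_1|\geq m_1$ and, for $i\geq 2$, the fact that $|C_i|\geq m_i$ with $|C_i|\equiv m_i\pmod 2$: an isolated vertex has $|C_i|=m_i=1$, whereas every larger component satisfies $|C_i|\geq m_i+2$. Feeding these into $n=|C_1|+\sum_{i\geq 2}|C_i|$ together with $m_1\geq 1+\sum_{i\geq 2}m_i$, and using that $\sum_{i\geq 2}m_i$ is at least the number of isolated components among $C_2,\ldots,C_t$, yields $n\geq 2t-1$, hence $t\leq\lceil n/2\rceil$ and the lemma follows.

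The main obstacle is this last counting step, and specifically the treatment of isolated vertices: such a vertex carries imbalance $1$ but contributes no edge, so the bound $|C_i|\geq m_i$ alone is too weak while $|C_i|\geq 1$ discards the imbalance entirely. The key is that the hypothesis $m_1>\sum_{i\geq 2}m_i$ forces the total imbalance of the remaining components, and hence the number of isolated vertices besides $x$, to be strictly dominated by $m_1\leq |C_1|$, which is what prevents $G$ from having too many components. Keeping the isolated and non-isolated components separate (equivalently, exploiting the parity identity $|C_i|\equiv m_i\pmod 2$) is exactly what makes the estimate $n\geq 2t-1$ sharp enough to conclude; the extremal case, $x$ together with $(n-1)/2$ disjoint edges, shows the bound $\lfloor n/2\rfloor$ cannot be improved.
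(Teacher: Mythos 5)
Your argument is correct, but it goes by a genuinely different route than the paper's. The paper's proof is a two-line local argument: fix any bipartition, let $C$ be the part containing $x$, and observe that $C\setminus\{x\}$ must contain at least $\lfloor n/2\rfloor$ non-isolated vertices --- otherwise moving the isolated vertices of $C\setminus\{x\}$ across yields another valid bipartition in which $x$'s part has size at most $\lfloor n/2\rfloor$ and hence is not strictly larger. Since $C$ is independent, each such non-isolated vertex is the unique $C$-endpoint of some edge, giving $\lfloor n/2\rfloor$ distinct edges directly. You instead decompose into connected components, reformulate the hypothesis as the imbalance inequality $m_1>\sum_{i\geq 2}m_i$, count edges by $|E(G)|\geq n-t$, and bound the number $t$ of components via the size/parity estimates $|C_1|\geq m_1$ and $|C_i|\geq m_i+2$ for non-isolated $C_i$; I checked the chain $n\geq m_1+\sum_{i\geq 2}m_i+2N\geq 1+2I+2N=2t-1$ and it closes correctly, giving $t\leq\lceil n/2\rceil$ and hence the bound. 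Your approach is longer and the isolated-vertex bookkeeping is more delicate (as you note yourself), but it buys more: the inequality $m_1>\sum_{i\geq 2}m_i$ is an exact characterization of when such a vertex $x$ exists, and it makes the extremal configuration ($x$ plus $(n-1)/2$ disjoint edges) transparent, whereas the paper's argument only delivers the edge count. Either proof would serve the lemma; the paper's is the more economical for its purpose here.
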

\begin{proof} {\em (of Lemma \ref{belsolemma1})}. Let $C$ be the part of a bipartition of $G$
that contains $x$. It is easy to see that $C\setminus \{ x\}$ contains at least 
$\lfloor n/2 \rfloor$ non-isolated vertices, otherwise moving the isolated 
vertices of $C\setminus \{ x\}$  to the other part of the bipartition would 
leave $x$ in the smaller part, which is impossible. Notice that $x$ itself 
may be an isolated vertex. Since there is no edge between two vertices 
of $C$, there must be at least $\lfloor n/2 \rfloor$ edges in $G$. 
This completes the proof of Lemma \ref{belsolemma1}.
\qed
\end{proof}

By Lemma \ref{belsolemma1} the second case also leads to a contradiction, 
which shows that for $n \leq 2k-4$ the questioner \bQ cannot solve the problem.  


Now let us turn our attention to the only remaining case, $n=2k-3$. 
Since \bA must always answer \no, every set of $k$ vertices of $G$ spans at least one edge, in other words $\alpha (G)$, the maximum size of an independent vertex set of $G$ is at most $k-1$. It is well-known that $\alpha (G) + \tau(G) = n$, where $\tau (G)$ is  the minimum size of a vertex cover of $G$, thus $\tau(G) \geq  n - k +1$. $G$ is a bipartite graph, so by the theorem of K\H onig \cite{konig} $\tau (G) = \nu (G)$, where $\nu (G)$ is the maximum size of an independent edge set of $G$. Thus there exists an independent edge set $X$ of size $n-k+1 =  k-2 = (n-1)/2$ in $G$. Now the ball corresponding to the only vertex that is not covered by $X$ must be in majority, thus
finishing the proof of Theorem~\ref{szeptetel}.
\qed
\end{proof}

\subsection*{Proof of Lemma~\ref{oddlm}}

\begin{proof}

Let $S$ be the set of balls. Let us remove a
ball $x$ from $S$. By definition
we can solve the majority problem on $S \setminus \{ x \}$ using $q_k (n-1)$ queries.
Now there are two cases to consider. If 
$S \setminus \{ x \}$ has no majority, then the number
of blue and red balls in the set $S \setminus \{ x \}$ is the
same and therefore $x$ has the majority color in the set $S$. On the other hand if
$S \setminus \{ x \}$ has majority, then since $n-1$ is
even the number of balls of the majority
color is at least $(n-1)/2 + 2$ and therefore the majority color of $S$
must be the same as the majority color for
$S \setminus \{ x \}$, proving  $q_k (n) \leq q_k (n-1)$ for $n$ odd.
\qed
\end{proof}

\subsection*{Proof of Lemma~\ref{inumber}}

\begin{proof}
 In case (a) of the proof of Lemma~\ref{ub}, the statement of the lemma is straightforward even without the additional question. The same holds for case (b) if $r=0,1,2$. Let us consider now $r=3$. Let $R=\{x,y,z\}$ be the set of
remaining balls. The algorithm chooses any two among them 
and executes the same comparisons as for the case $r=2$ in order to ascertain whether these two balls are
identically colored or not. If they have different colors, the gap (difference between majority amd minority) is 1; if they have the same color,
one additional comparison among $x,y$, and $z$ will ascertain whether the gap is 1 or 3, finishing the proof.
\qed
\end{proof}

\subsection*{Proof of Lemma~\ref{odd}}

\begin{proof}
Let us consider a set $S$ of $m$ balls with $m$ even. Take 3 arbitrary balls $a,b$, and $c$ from $S$ and 
solve the majority problem on the remaining $m-3$ balls by using $q_3(m-3)$ queries. 
In view of Lemma~\ref{inumber}, using $q_3(m-3)+1$ questions we also know a number $i$, 
such that there are $i$ balls of the majority and $m -3 - i$ of the minority color. It is obvious that $i > (m-4) / 2$.

Let $x$ be the majority ball provided by the algorithm on the $m-3$ balls. Notice that $x$ exists as $m-3$ is odd.
We first show that with at most $4$ more queries we can solve the majority problem on the whole set $S$ of 
$m$ balls. First we perform the query $\{a,b,c\}$.

Assume first that we got a \yes\ answer on $\{a,b,c\}$. Now we perform query $\{a,b,x\}$. If we get a
\yes\ on $\{a,b,x\}$ we infer that $x$ is a majority ball for $S$. If $\{a,b,x\}$ gives a \no\ answer, then 
three cases are possible:
\begin{enumerate}
 \item[] if $i < \frac{m}{2}$ then either of $a$, $b$ or $c$ can be showed as majority ball;
 \item[] if $i = \frac{m}{2}$ then there is no majority ball;
 \item[] if $i > \frac{m}{2}$ then $x$ is a majority ball.
\end{enumerate}

If on the other hand we got a \no\ answer on $\{a,b,c\}$, we can solve the majority problem by using the following 
three additional queries: $\{a,b,x\}$, $\{a,c,x\}$ and $\{b,c,x\}$. 
If at least one of the three answers is
\yes\ then we can infer that $x$ is a majority ball. If we get \no\ on all three queries, then we infer 
that there is no majority among $a,b,c$ and $x$. Two cases are possible:
\begin{enumerate}
 \item[] if $i = \frac{m-4}{2} + 1$, then there is no majority;
 \item[] if $i > \frac{m-4}{2} + 1$, then $x$ is a majority ball.
\end{enumerate}
This proves that with a total of $(q_3(m-3)+1)+4 = q_3(m-3)+5$ queries we can solve the majority problem on the whole set $S$.
Therefore, recalling that on $m$ even, we know by Lemma~\ref{even} that $q_3(m) \geq m-1$, we have:
$q_3(m-3) + 5 \geq q_3(m) \geq m-1$,
which implies $q_3(m-3) \geq m-6$. The lemma follows by letting $n=m-3$. 
This completes the proof of Lemma~\ref{odd}. 
\qed
\end{proof}

\subsection*{Number of queries in the Y/N model, for $n\leq 6$}

In the sequel we compute the number of queries when the
number of balls is small ($n \leq 6$).

\paragraph{\bf Example: $q_3 (4) = 4$.}
There are ${4 \choose 3} = 4$ triples. By querying all
four of them we can decide whether or not there is 
majority: if all four answers are \no~ there is no
majority otherwise a \yes~ answer on a triple yields 
a ball of majority color.

\paragraph{\bf Example: $q_3 (5) = 4$.}
Observe that by Lemma~\ref{oddlm}, we have that $q_3 (5) \leq q_3 (4) =4$.
Assume to the contrary that $q_3 (5) \leq 3$ and let the
balls be numbered $1,2,3,4,5$. We show that the
majority problem cannot be solved with three queries. Consider the three
triples $Q_1, Q_2, Q_3$ that are being queried. 
If $x \not\in Q_1 \cup Q_2 \cup Q_3$, for some ball $x$,
then the three queries apply only to four of the five balls
$\{1,2,3,4,5\} \setminus \{ x\}$; therefore an ``adversary''
flipping the color of $x$ can change the outcome which shows that the majority problem
cannot be solved. Therefore it remains to consider the 
case $Q_1 \cup Q_2 \cup Q_3 = \{1,2,3,4,5\}$.
It is easy to see that there are two triples, say $Q_1, Q_2$,
such that $|Q_1 \cap Q_2| = 2$. W.l.o.g. assume
$Q_1 = \{ 1,2,3 \}$ and $Q_2 =\{ 2,3,4\}$. 
Clearly, $5 \in Q_3$ (otherwise we are in the previous case
and the majority problem cannot be solved).
There are (essentially) three cases to consider depending on the other two
balls in $Q_3$: 1) $2, 3 \in Q_3$, 2) $1, 3 \in Q_3$, and 3) $2, 4 \in Q_3$,
Again, a simple ``adversary argument''
shows that the majority problem
cannot be solved in any of these cases.

\paragraph{\bf Example: $q_3 (6) = 7$.}
Lemma~\ref{ub} tells us that $q_3(4m+2) \leq 4m+3 = n+1$. For $m=1$ 
it follows that $7$ queries are sufficient for $6$ balls.
What is more important is the following claim.

\paragraph{\bf Claim.} $6$ queries are not enough for $6$ balls.
Suppose the contrary. Since not all 10 pairs of complementary 
triples are asked, the $3$ red, $3$ blue situation is always possible 
if all the answers are \no. We show that if the triples are not in a 
special setting (described later), then the $4$ blue, $2$ red situation 
is also possible if all the answers are \no.

If the $4$-$2$ situation is not possible with just \no~ answers, then every 
$4$-tuple contains a triple we asked. In other words, the complements 
of our triples contain every possible pair of balls. So we would like 
to cover the edges of the complete graph on $6$ vertices with $6$ (or less) 
triangles. Since a triangle covers $2$ edges of the same end-vertex (i.e.
triangle $123$ covers $12$ and $13$) and every vertex has degree $5$ in $K_6$ 
(which is odd), this is possible if and only if three edges forming a 
perfect matching are double covered and all the others are simple covered 
($6$ triangles have $18$ edges and we have $15$ edges  (to see that $5$ triangle 
is not enough is even easier)). Let these edges be $12, 34, 56$.

Now we give an Adversary strategy: the first five questions are 
answered \no, the sixth is \no, if the above setting is not satisfied, 
otherwise it is \yes. In the first case both the $3$-$3$ and the $4$-$2$ 
situations are possible. We show the same for the second case. 
Now all triples contain either $12, 34$, or $56$ (since we have six 
triples and these edges are double covered), so we may suppose 
that the last triple was $456$ (that is, the last question was $123$). 
We know that $1, 2$, and $3$ are of the same color, but the other balls 
may all have the other color, since the only question (answered \no)~ 
that is not consistent with this is the question $456$ and it is easy 
to see that this cannot be a question, because in this case we have 
the triples $123$ and $456$ and all other four  triples have just one ball 
in common with one of them, so there are four double covered edges, 
which is impossible. It remains to show that the $4$-$2$ situation is 
also possible. We know that $123$ is a question, let the other 
question whose complement covers $56$ be $1234\setminus\{x\}$, where $x$ is $1,2$, 
or $3$. Let $y$ be an element from $1,2,3$ different from $x$. Then the 
answers are consistent with the situation that $1,2,3$, and $y$ are 
blue, the others are red. 
This finishes the proof.


\end{document}